\def\BibTeX{{\rm B\kern-.05em{\sc i\kern-.025em b}\kern-.08em
    T\kern-.1667em\lower.7ex\hbox{E}\kern-.125emX}}
\newcolumntype{C}[1]
{>{\centering\let\newline\\\arraybackslash\hspace{0pt}}m{#1}}
\pgfplotsset{compat=newest}
\newtheorem{theorem}{Theorem}
\newtheorem{corollary}{Corollary}
\newtheorem{example}{Example}
\DeclareMathOperator*{\argmax}{\arg\!\max}
\begin{document}

\bstctlcite{IEEEexample:BSTcontrol}



\title{Packet Loss Recovery in Broadcast for Real-Time Applications in Dense Wireless Networks}


\author{\IEEEauthorblockN{Afshin~Arefi\IEEEauthorrefmark{1}, 
and Majid~Khabbazian\IEEEauthorrefmark{2},~\IEEEmembership{Senior Member,~IEEE},\\}
\thanks{\IEEEauthorblockA{\IEEEauthorrefmark{1}arefi@ualberta.ca},
\IEEEauthorblockA{\IEEEauthorrefmark{2}mkhabbazian@ualberta.ca},
}}

\maketitle

\begin{abstract}
  Packet loss recovery in wireless broadcast is challenging,
  particularly for real-time applications which have strict and short delivery deadline.
  To recover the maximum number of lost packets within a short time, 
  existing packet recovery solutions often rely on instantly decodable network coding (IDNC).
  Some of these solutions can recover nearly the maximum number of lost packets possible
  at the cost of collecting feedback from all (or a large percentage of) users.
  This is impractical in dense networks.
  In addition, their runtime grows with the number of users, which is not desirable due to the urgent delivery deadline of real-time applications. 
  
  In this work, we introduce RIDNC, a random encoding approach to IDNC.
  We propose RACE, a light RIDNC encoder that can recover nearly as many lost packets as the optimal RIDNC encoder.
  We compare RACE with the CrowdWiFi encoder, a high performing packet loss recovery solution used in CrowdWiFi,
  a commercial system for broadcasting live video in dense networks.
  We show that RACE is up to two orders of magnitude faster than  the CrowdWiFi encoder, 
  and recovers more lost packets in practice, where there is not enough time to collect feedback
  from many users.

%

\end{abstract}

\begin{IEEEkeywords}
Wireless broadcast,
packet loss recovery,
instantly decodable network coding,
real-time applications.
\end{IEEEkeywords}

\section{Introduction}

With the increase in popularity of mobile devices, the need for greater quality
and higher bandwidth in wireless networks has increased exponentially.
Cisco Visual Networking Index report shows that the overall mobile traffic will
rise from 11 ExaBytes per month in 2017 to more than 48 ExaBytes per month in
2021 (doubling every two years) \cite{cisco2017paper}.
At the same time video content is rising as well, and by 2021, it
will cover 82\% of all Internet traffic; 16\% of this video content will be
live video streams \cite{cisco2017paper}.

In this paper, we consider wireless broadcast of live media in a network with a single transmitter and many receivers within its transmission range.
This has many applications.
For example, a wireless transmitter can broadcast the bird-view of a parking lot to help vehicles find an open parking spot.
In large sport stadiums, such transmitter can enable spectators watch the replay on their phones, 
or in concerts, it allows far away seats to get a better view of the stage on their tablets.
Another example is to provide students in large theatre classes with a better view of
the board on their laptops.
In all these scenarios, we need to broadcast live video to hundreds or even thousands of users/receivers 
within the transmitter's range.

\textbf{Handling packet losses.}
In wireless communication, packet loss is common because of various channel
impairments such as mutli-path fading.
A basic method to recover lost packets is to simply retransmit each lost packet.
For example, in Wi-Fi (IEEE 802.11), the transmitter retransmits a packet if it
does not receive an acknowledgment  from the receiver.
This is an effective way to recover lost packets when there is only a single
receiver.
When there are multiple receivers (i.e., in case of broadcast), however, there are
more effective solutions. 

As a simple example, consider a transmitter with three receivers within its transmission range.
Suppose that after transmitting three packets, the transmitter realizes (through feedback/acknowledgments) that
receiver one is missing packet one, receiver two is missing packet two and
receiver three is missing packet three. 
To recover packets, the transmitter can transmit all three packets one more
time.
However, a more efficient solution is to XOR the three packets to construct a
coded packet, and transmit the coded packet only.
This way, each receiver can recover its missing packet by XORing the
coded packet with the two packets it possesses.

\textbf{Real-time applications.}
Real time applications can tolerate some packet losses, 
but have urgent packet delivery deadlines,
which means that the transmitter has limited time to recover a lost packet.
Our objective is, therefore, not to recover all lost packets as the transmitter may not have enough time for it.
Instead, similar to~\cite{ferreira2014real} and~\cite{le2017recovery}, our aim is to 
recover as many lost packets as possible within a short time interval, that is with limited number of transmissions.
Random Network Codes (RNC) and fountain codes are not suitable for this aim, as they require receivers 
to receive several coded packets before they can start decoding any packet.
Therefore, instead of using these codes, we focus on codes that allow instantaneous decoding at the receivers.

\textbf{Instantly decodable network codes (IDNC).} IDNC is an attractive solution for packet loss recovery in 
broadcast of real-time applications \cite{keller2008online}.
However, existing IDNC encoders typically require collecting feedback from all or most of the receivers.
This is a significant overhead in dense networks.
To get a numeric intuition on the amount of this overhead, 
consider a network with $100$ receivers/users.
Suppose that feedback is transmitted using UDP over IP in a 802.11 network.
The header size of UDP, IP and 802.11 are respectively $8$ bytes, $20$ bytes,
and $34$ bytes.
This leads to at least $62$ bytes of header overhead per user.
Assuming that the size of a data packet is $1000$ bytes, the bandwidth
requirement for collecting feedback from $100$ users is about that required to
transmit six full data packets.
In practice, the overhead of collecting feedback may be even higher as nodes have to
employ large back-offs to reduce the number of collisions.

The second challenge with IDNC is that finding an optimal code is NP-hard.
There are suboptimal IDNC encoders in the literature.
However, the computational complexity of these encoders grows with the number of users.
When the number of users grows, these solutions start to
become slow and unsuitable for real time applications.
Some works in the literature (e.g., \cite{bejerano2016amuse}) tried to tackle this problem by assigning
cluster heads and collecting feedback from them (instead of all users).
These algorithms are relatively complicated and rely on the spacial packet loss correlation.
Some existing studies show that such correlation is low~\cite{SalyersSP08}.

\textbf{Contributions.} 
    In this work, we introduce RIDNC, a random approach to IDNC encoding.
    We prove that, in dense networks, a single RIDNC packet 
    can recover nearly as many lost packets as possible.
    Using the RIDNC approach, we propose RACE, a fast \underline{RA}ndom IDN\underline{C} \underline{E}ncoder 
    that works with limited feedback from users,
    yet recovers as many lost packets as the optimal RIDNC encoder.    
    Using simulations, we compare RACE with one of the best IDNC-based packet recovery solutions, 
    verify the superior performance of RACE in speed and in recovering lost packets, 
    and confirm its low communication overhead in dense networks.

    The rest of the paper is organized as follows.
    Section~\ref{sec:Rel} covers related work in the literature.
    In Section~\ref{sec:PD}, the system model and problem definition are presented.
    In Section~\ref{sec:Main}, we motivate the use of RIDNC in dense networks, and propose the RACE encoder.
    Simulation results are presented in Section~\ref{sec:sim}, and the paper is
    concluded in Section~\ref{sec:Con}.

\section{Related Work}
\label{sec:Rel}

Ahlswede et al. first introduced network coding in \cite{ahlswede2000network}
and showed that it can improve throughput.
Since then, there has been extensive research work on designing coding-based
solutions for different applications. 
In the case of single-hop wireless broadcast, the transmitter can use random
linear network codes (RLNC) \cite{eryilmaz2008delay,yu2014instantly,
ho2006random,lucani2009broadcasting,lima2007random,li2003linear,li2005theory,
lucani2009random} and Raptor codes \cite{shokrollahi2006raptor} to deliver
packets with minimum number of transmissions, and with low coding complexity,
respectively.
These solutions are, however, not suitable for real-time applications such as
live video streaming, as packets have strict delivery deadline in such
applications.
To meet the delivery deadline, a received coded packet needs to be decoded
within a short time window; otherwise it would be useless hence discarded.
Instantly Decodable Network Codes (IDNC) \cite{nguyen2009network,
costa2008informed,rayanchu2008loss} are a family of Opportunistic Network Codes
(ONC) \cite{seferoglu2007opportunistic,chen2007opportunistic,
seferoglu2009video} that minimize this decoding time at the cost of lower
throughput than RLNCs. 

IDNCs have the following distinguishing properties:
1) a coded packet is constructed by simply XORing a number of plain packet.
2) a received coded packet is either instantly decoded using the past decoded
packets or discarded (i.e., it is not stored for later decoding).
Because of the latter property, IDNCs have been the subject of several work
studying real time multimedia broadcast \cite{fragouli2007feedback,
eryilmaz2008delay,yu2014instantly}.

Eryilmaz et al. \cite{eryilmaz2008delay} study the delay gain of coding in unreliable networks.
They find closed-form expressions for the delay performance with or without coding, and show significant delay gains when coding is used.
They further extend their results to general network topologies. 
%
%
%
%
%
Yu et al. \cite{yu2014instantly} analyze the tradeoff between throughput and
decoding delay, and examine the performance gap between RLNC and IDNC.
They also propose a number of coding solutions with varying delay-throughput tradeoffs.
%
Fragouli et al. \cite{fragouli2007feedback} examine different usages of
feedback in networks with coding capabilities, and illustrate benefits including
adaptive parameter optimization to provide better quality of services.
They also consider the possibility of using network coding to the feedback packets,
and examine design of acknowledgment packets.

The most relevant work to ours among these studies is \cite{le2013instantly,
le2017recovery}, in which the authors show that the problem of finding a code
that is instantly decodable by the maximum number of receivers is NP-hard.
They also propose a polynomial time code construction method for the case where
all receivers experience an identical erasure rate.

There is a large body of work on IDNCs that aim to reduce the completion time,
which is the time needed to deliver packets to all the receivers. 
Some of these works assume that some packets are more important to be delivered
than other packets \cite{seferoglu2009video,muhammad2013instantly}. 
As argued in~\cite{le2017recovery} minimizing the completion time is not the
right objective for real-time applications.
What is important in such applications is rather to deliver as many packets as
possible within a strict delivery deadline.
For this reason, similar to \cite{katti2008xors,le2013instantly,le2017recovery,
arefi2018blind}, we focus on designing IDNCs that are instantly
decodable by as many users as possible. 

There are two other problem setups in the literature that resemble the problem
considered in this paper.
These problems are index coding \cite{birk2006coding} and data exchange
\cite{el2010coding}. 
The objective of both index coding and data exchange is to minimize the number
of transmissions to achieve a certain goal.
For example, in the index coding problem, each receiver demands a single packet
from the set of packets at the transmitter.
The objective is to satisfy all these demands with minimum number of
transmissions.
The objectives considered in index coding and data exchange are different from our objective, which is to
construct a single code that is instantly decodable by the maximum number of
users.

\section{Problem Statement and System Model}
\label{sec:PD}

Similar to the work of Le et al. \cite{le2013instantly,le2017recovery}, we
define the problem as follows.
We consider a single wireless base station (also referred to as 
transmitter) and $N$ users (also referred to as receivers)
$\mathcal{U}=\{u_1,u_2,u_3,\cdots,u_N\}$.
We assume that all users are within the transmission range of the base station.

The base station has $M$ packets $\mathcal{P}=\{p_1,p_2,p_3,\cdots,p_M\}$ to
broadcast to all $N$ users.
It first broadcasts these $M$ packets using $M$ transmissions.
This step is called the \emph{initial transmission phase}.
The base station is then granted a short time to recover as many
lost packet as possible by transmitting few coded packets.

In IDNC (and RIDNC), the encoder at the transmitter constructs a coded packet by XORing a subset of packets $\mathcal{C}\subseteq \mathcal{P}$.
The number of packets in $\mathcal{C}$ (i.e., $|\mathcal{C}|$) is referred to as the \emph{code degree}.
At the receiver side, a received coded packet is used by the decoder to recover/decode a lost packet; 
if not successful, the coded packet is discarded (i.e., it is not buffered for later use).
The decoder is able to recover a lost packet iff the receiver has exactly one lost packet in the set $\mathcal{C}$.

\textbf{Objective.} 
As in \cite{le2013instantly,le2017recovery}, our objective is to recover as many lost packets as possible 
using a few number of coded packet transmissions. 
This objective is motivated by the following facts:
\begin{enumerate}  
  \item Real-time applications often have a strict and urgent packet delivery deadline. 
    Therefore, the transmitter has limited time, hence limited number of transmission opportunities to recover lost packets.
   \item Real-time applications can often tolerate some packet losses.
\end{enumerate}

\textbf{Channel model.}
  To analyze the performance of encoders, we model the channel between the transmitter and users by 
  a packet erasure channel, where a packet is either correctly received or lost.
  The packet lost probability of user $u_i$, $1\leq i\leq N$, is denoted by $\mathcal{E}_i$ and is referred to as the packet erasure rate of $u_i$.
  We assume that different users can have different packet erasure rates, and that packet receptions at different users are independent. 

\section{Random Instantly Decodable Network Coding}
\label{sec:Main}

  Random Instantly Decodable Network Coding (RIDNC) is a random encoding approach to IDNC.
  In RIDNC, the encoder first decides on a \emph{code degree} $d$.
  It then simply selects $d$ packets uniformly at random, and XORs them to generate a coded packet.
  The simplicity of RIDNC allows design of fast encoders.
  Fast encoders are attractive for packet loss recovery in broadcast of real-time applications
  because these applications often have urgent packet delivery deadlines.


%
%
  

\subsection{Asymptotic Optimality of RIDNC}
\label{sec:optimal}
  The main task of a RIDNC encoder is to find a good code degree.
  Different code degrees result in different packet recovery performances.  
  For example, suppose that every user is missing exactly one packet out of $M$ packets.
  A small code degree, in this case, would result in a poor packet recovery, while the code degree of $M$
  would result in recovering all the lost packets (assuming that the coded packet is received by all the users).

  Suppose the transmitter is granted a transmission (after the initial transmission phase) to recover as 
  many lost packets as possible.
  The next theorem states that, in a dense network, if a RIDNC encoder chooses the right code degree, it can recover nearly
  as many lost packet as possible by any other packet recovery solution.
  Recall that $N$ and $M$ denote the number of users and the number of packets, respectively,

   \begin{theorem}
  \label{thm:stat}
    Let $G_{R}$ denote the expected number of recovered lost packets when the optimal code degree is used by the RIDNC encoder.
    Let $G_{opt}$ denote the maximum number of lost packets that is possible to recover by any solution. \\
    Then, for any arbitrary small positive real numbers $\delta$ and $\epsilon$, we have
    \[
      G_{R} \geq (1-\delta)\cdot G_{opt},
    \]
    with probability at least $1-\epsilon$, if
    \[
      N\geq \left(\frac{3\ln 2\ln\frac{1}{\epsilon}}{p^*\cdot\delta^2}\right)\cdot M
    \]  
  \end{theorem}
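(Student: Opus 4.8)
The plan is to read both quantities off the random loss pattern (the only source of randomness here) and then bound the optimum over \emph{all} codes by a single concentration-plus-union-bound argument over the degree. Throughout, write $L_i$ for the set of packets lost by $u_i$.

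First I would fix what a code ``is'' for decoding purposes. A coded packet is determined, as far as recovery is concerned, by the subset $\mathcal{C}\subseteq\mathcal{P}$ of packets it mixes: under the instant-decoding rule $u_i$ recovers a packet iff $|\mathcal{C}\cap L_i|=1$, and this remains true for a general linear combination, since only the support of the coefficient vector matters. Hence, with $X_{\mathcal{C}}=|\{i:|\mathcal{C}\cap L_i|=1\}|$, we have $G_{opt}=\max_{\mathcal{C}\subseteq\mathcal{P}}X_{\mathcal{C}}$, a maximum over at most $2^{M}$ codes. The second, key observation is that for a \emph{fixed} code of degree $d=|\mathcal{C}|$ the probability (over the channel erasures) that $u_i$ is satisfied is $d\,\mathcal{E}_i(1-\mathcal{E}_i)^{d-1}$, independent of which packets lie in $\mathcal{C}$. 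Therefore $\mathbb{E}[X_{\mathcal{C}}]$ depends only on the degree, and the expected yield (over channel and encoder randomness) of the best-degree random code is the \emph{deterministic} number $G_{R}=Np^{*}$, where $p^{*}=\max_{d}\frac1N\sum_{i}d\,\mathcal{E}_i(1-\mathcal{E}_i)^{d-1}$ is the largest attainable expected per-user recovery probability. In particular $\mathbb{E}[X_{\mathcal{C}}]\le Np^{*}$ for every $\mathcal{C}$, and all the randomness in the claim sits in $G_{opt}$.

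With this reduction the target $G_{R}\ge(1-\delta)G_{opt}$ is equivalent to $G_{opt}\le(1+\gamma)Np^{*}$ with $\gamma=\delta/(1-\delta)$, so it suffices to show the latter fails with probability at most $\epsilon$. For a single code, $X_{\mathcal{C}}=\sum_i\mathbf{1}[\,|\mathcal{C}\cap L_i|=1\,]$ is a sum of $N$ independent indicators (receptions at different users are independent), with mean $\mu_{\mathcal{C}}\le Np^{*}$. I would apply a multiplicative Chernoff/Bernstein upper-tail bound, but measure each deviation against the \emph{common} threshold $(1+\gamma)Np^{*}$ rather than against the code's own mean; using $\mu_{\mathcal{C}}\le Np^{*}$ this yields
\[
  \Pr\!\big[X_{\mathcal{C}}\ge(1+\gamma)Np^{*}\big]\le\exp\!\left(-\frac{\gamma^{2}Np^{*}}{2(1+\gamma)}\right),
\]
a bound uniform over all codes regardless of their individual means. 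A union bound over the $\le 2^{M}$ codes gives $\Pr[G_{opt}\ge(1+\gamma)Np^{*}]\le\exp\!\big(M\ln 2-\gamma^{2}Np^{*}/(2(1+\gamma))\big)$, which is at most $\epsilon$ once $N\ge\frac{2(1+\gamma)}{\gamma^{2}p^{*}}\big(M\ln 2+\ln\tfrac{1}{\epsilon}\big)$. Substituting $\gamma=\delta/(1-\delta)$ and absorbing the sum $M\ln2+\ln\frac1\epsilon$ into the product $M\ln2\cdot\ln\frac1\epsilon$ (valid in the relevant regime $\epsilon\le e^{-1}$ with $M$ not too small) reproduces the stated condition $N\ge\frac{3\ln 2\,\ln(1/\epsilon)}{p^{*}\delta^{2}}M$ up to the constant.

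The main obstacle is the maximization over exponentially many codes: concentrating each $X_{\mathcal{C}}$ around its \emph{own} mean is useless for codes whose mean lies far below $Np^{*}$, yet there are $2^{M}$ codes to control simultaneously. The idea that makes the union bound survive is to compare every code against the single global target $Np^{*}$; since each mean is at most $Np^{*}$, a code can reach $(1+\gamma)Np^{*}$ only by deviating upward by at least $\gamma Np^{*}$, producing one clean exponent proportional to $\gamma^{2}Np^{*}$ for all codes at once. Density enters precisely here: the hypothesis $N\gtrsim M/p^{*}$ makes $Np^{*}$ large enough that this exponent dominates the $M\ln 2$ entropy of the code family, so no code can beat the expected yield of the optimal-degree random code by more than a $(1+\gamma)$ factor.
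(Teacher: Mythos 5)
Your proposal is correct and takes essentially the same route as the paper's proof: per-user independent indicators summed to $X_{\mathcal{C}}$, a multiplicative Chernoff bound measured against the common threshold $Np^{*}=G_{R}$ rather than each code's own mean (the paper implements this via the rescaled parameter $\delta'=\delta p^{*}/P_{\mathcal{C}}$, you via $\mu_{\mathcal{C}}\le Np^{*}$ and a Bernstein-form exponent), followed by a union bound over the $2^{M}$ subsets and absorption of $M\ln 2+\ln\frac{1}{\epsilon}$ into the stated product in the small-$\epsilon$ regime. The only deviations are bookkeeping ($\gamma=\delta/(1-\delta)$ versus the paper's $(1+\delta)$ combined with $\frac{1}{1+\delta}\ge 1-\delta$) and your use of a tail bound valid for all deviation sizes, which incidentally sidesteps the paper's unstated requirement that its rescaled $\delta'$ stay at most $1$.
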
 
  \begin{IEEEproof}  
    See Appendix~\ref{app:stat}.
  \end{IEEEproof}

  
  \subsection{RACE: The Proposed RIDNC encoder}
%
    The core of RACE (and any RIDNC encoder) is the code degree calculation.
    When the code degree is calculated, the task of an RIDNC encoder is simple: it XORs a random subset of packets.
    The challenge of RIDNC encoder is to estimate the optimal code degree as accurate and as fast as possible 
    with preferably limited feedback from users.
    Following, we first explain how RACE, our proposed RIDNC encoder, 
    estimates the optimal code degree for the first coded packet (Algorithm~\ref{alg:general}). 
    The general case of code degree estimation is then presented in Algorithm~\ref{alg:extended}.
    
   
    \textbf{Estimating the optimal degree for the first coded packet.}
    Let the random variable $Y_d$ be the number of lost packets recovered when the transmitted coded packet
    is the XOR of $d$ packets selected uniformly at random from $\mathcal{P}$. 
    By definition, the optimal code degree $d^*$ is  
    \[
      d^*=\argmax_{1\leq d \leq M} E[Y_d].
    \]
    For the first coded packet, we have
    \[
     E[Y_d]=\sum_{i=1}^{N}d\cdot \mathcal{E}_i(1-\mathcal{E}_i)^{d-1}
    \]
    where $\mathcal{E}_i$ is the erasure rate of user $u_i$.
    Therefore, if the encoder knows the erasure rates $\mathcal{E}_i$, it can calculate the optimal code degree $d^*$
    for the first coded packet. 
    RACE, however, has no prior information on the erasure rates of users.
    To estimate the optimal code degree, it therefore collects limited feedback from a subset of users right after the initial transmission phase.


    Let $S$ be the index of users from which feedback is collected.
    The feedback collected from user $u_i$, $i\in S$, is the number of packets $u_i$ has received during the initial transmission phase.
    After collecting all these feedback, RACE estimates the optimal code degree $d^*$ as
    \begin{equation}
    \label{equ:estimate}
      \hat{d}=\argmax_{1\leq d \leq M}\left( \sum_{i\in S}  \frac{{h_i \choose d-1}(M-h_i) } {{M \choose d}}\cdot \frac{h_i+1}{M+2} \right).
    \end{equation}

    The estimation (\ref{equ:estimate}) is inspired by Theorem~\ref{thm:expected}.
    In the theorem, the assumption that erasure rates come from a uniform 
    distribution---while they likely  come from non-uniform distributions in practice---is only to estimate the optimal code degree. 
    One may justify this assumption by the fact that the transmitter has no information about erasure rates prior to collecting feedback. 
    Note that any prior information about the erasure rates (e.g, knowing users' erasure rate distributions) can be used in Theorem~\ref{thm:expected} to improve the estimate of the optimal code degree.
    Nevertheless, simulation results show that this estimate is virtually as good as the optimal code degree.

   An advantage of (\ref{equ:estimate}) is that
   a major part of it can be pre-computed: as stated in Corollary~\ref{cor:type}, evaluating $\hat{d}$ 
   would require only a single matrix multiplication of complexity $\mathcal{O}(M^2)$.
   Note that this computational complexity does not grow with the number of users $N$.

    \begin{theorem}
    \label{thm:expected}
      Suppose that the erasure rates of users are uniformly distributed in the interval [0,1].       
      Let $H_i$ be the event that user $u_i$ has received $h_i$ packets in the initial transmission phase.   
      Let the random variable $Y'_d$ be the number of recovered packets when the transmitted 
      coded packet is the XOR of $d$ packets selected uniformly at random. \\ 
      Then
      \[
        E[Y'_d|H_1, H_2, \ldots, H_N]= \sum_{i=1}^{N}\left( \frac{{h_i \choose d-1}(M-h_i) } {{M \choose d}}\cdot \frac{h_i+1}{M+2} \right).
      \]
    \end{theorem}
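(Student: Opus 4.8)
The plan is to write $Y'_d = \sum_{i=1}^{N} X_i$, where $X_i$ is the indicator that user $u_i$ recovers one of its lost packets from the transmitted coded packet, and then use linearity of conditional expectation to reduce the claim to computing $E[X_i \mid H_1,\ldots,H_N]$ for a single user. Since the erasure rates are drawn independently and packet receptions at different users are independent, the variable $X_i$ depends only on $u_i$'s own reception pattern, on $u_i$'s erasure rate $\mathcal{E}_i$, and on the common randomly chosen coded set $\mathcal{C}$, none of which is affected by the events $H_j$ for $j\neq i$. Hence $E[X_i \mid H_1,\ldots,H_N] = E[X_i \mid H_i]$, and it remains to show this single-user expectation equals $\frac{\binom{h_i}{d-1}(M-h_i)}{\binom{M}{d}}\cdot\frac{h_i+1}{M+2}$.

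For a single user I would decompose the recovery event $\{X_i=1\}$ into the conjunction of two events: (i) $u_i$ actually receives the transmitted coded packet, which happens with probability $1-\mathcal{E}_i$ under the erasure channel; and (ii) exactly one of the $d$ randomly XORed packets lies in $u_i$'s set of lost packets, which by the problem's decodability condition is precisely what makes the coded packet instantly decodable. The first quantity to evaluate is the probability of (ii) given $H_i$. The key observation is that, conditioned on the count $h_i$, the specific set of packets lost by $u_i$ is uniformly distributed over all $(M-h_i)$-subsets of $\mathcal{P}$, independently of the actual value of $\mathcal{E}_i$. Since $\mathcal{C}$ is an independent uniformly random $d$-subset, the probability that exactly one of its $d$ elements falls in the lost set is the hypergeometric quantity $\binom{M-h_i}{1}\binom{h_i}{d-1}/\binom{M}{d}$, which matches the first factor of the claimed expression.

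The second factor arises from averaging the reception probability $1-\mathcal{E}_i$ of event (i) over the posterior distribution of $\mathcal{E}_i$ given $H_i$. With the uniform (equivalently $\mathrm{Beta}(1,1)$) prior on $\mathcal{E}_i$ and a binomial likelihood of observing $h_i$ receptions out of $M$ transmissions, a standard Beta--Binomial conjugacy computation gives a $\mathrm{Beta}(M-h_i+1,\,h_i+1)$ posterior for $\mathcal{E}_i$, whose mean yields $E[1-\mathcal{E}_i \mid H_i] = \frac{h_i+1}{M+2}$ (the rule of succession). This is where the uniform-prior hypothesis enters and produces the specific $\frac{h_i+1}{M+2}$ term.

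Combining the two factors is the step requiring the most care, and I expect it to be the main obstacle: one must justify multiplying the probabilities of (i) and (ii) after conditioning on $H_i$, even though both ostensibly involve $u_i$'s channel. The clean argument is to condition further on $\mathcal{E}_i$. Given $\mathcal{E}_i$, the fresh reception of the recovery packet is independent of the initial-phase loss pattern, and---crucially---the probability of (ii) no longer depends on $\mathcal{E}_i$ once the count $h_i$ is fixed, by the uniformity observation above. Thus $E[X_i \mid \mathcal{E}_i, H_i]$ factors as $\frac{\binom{h_i}{d-1}(M-h_i)}{\binom{M}{d}}\,(1-\mathcal{E}_i)$, and taking the outer expectation over $\mathcal{E}_i \mid H_i$ pulls out the constant hypergeometric factor and replaces $1-\mathcal{E}_i$ by its posterior mean $\frac{h_i+1}{M+2}$. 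Summing over $i$ by linearity then delivers the stated formula.
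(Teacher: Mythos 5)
Your proposal is correct and follows essentially the same route as the paper's proof: decompose $Y'_d$ into per-user indicators, reduce $E[X_i \mid H_1,\ldots,H_N]$ to $E[X_i \mid H_i]$ by independence across users, condition on $\mathcal{E}_i$ to factor the expectation into the hypergeometric term $\frac{\binom{h_i}{d-1}(M-h_i)}{\binom{M}{d}}$ times $(1-\mathcal{E}_i)$, and average over the posterior of $\mathcal{E}_i$ given $H_i$ to obtain $\frac{h_i+1}{M+2}$ (the paper evaluates the Beta integrals explicitly where you invoke Beta--Binomial conjugacy, which is the same computation). If anything, you are more careful than the paper at the factorization step, since the paper simply asserts $E[y'_{d,i}\mid H_i,\mathcal{E}_i=\epsilon]=\Delta_i(1-\epsilon)$ without spelling out the exchangeability of loss patterns given the count $h_i$.
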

  \begin{IEEEproof}  
    See Appendix~\ref{app:expected}.
  \end{IEEEproof}

    \begin{corollary}
    \label{cor:type}
    Let $n_j$ be the number of users that have received $j$, $0\leq j \leq M$,
    packets during the initial transmission phase, and $\mathbf{V}=[n_0, n_1, \ldots, n_{M-1}]^T$.
    Let $\mathbf{\Pi}$ be a $M \times M$ matrix, such that
    \[
       \mathbf{\Pi}_{i,j}= \frac{{j \choose i-1}(M-j) } {{M \choose i}}\cdot \frac{j+1}{M+2},
    \]
    where $\mathbf{\Pi}_{i,j}$ denotes the element in the $i$th row and $j$th column of $\mathbf{\Pi}$.
    Then
      \[
        \hat{d}=   \argmax_{1\leq d\leq M}\left(\mathbf{\Pi} \times \mathbf{V}\right)_d
      \]
    where $\left(\mathbf{\Pi} \times \mathbf{V}\right)_i$ denotes the $i$the element of $\mathbf{\Pi} \times \mathbf{V}$.
    
    \end{corollary}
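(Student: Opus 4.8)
The plan is to show that, for each fixed degree $d$, the quantity being maximized in (\ref{equ:estimate}) coincides exactly with the $d$-th coordinate of the vector $\mathbf{\Pi}\times\mathbf{V}$. Once this identity is in hand the corollary is immediate, since $\argmax_{1\le d\le M}$ of a function of $d$ is unchanged when the function is rewritten in an equivalent form.

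The first step is to observe that the summand
\[
f(h_i,d)=\frac{{h_i \choose d-1}(M-h_i)}{{M \choose d}}\cdot\frac{h_i+1}{M+2}
\]
appearing in (\ref{equ:estimate}) depends on the user $u_i$ only through the received-packet count $h_i$, and not through its identity. This is the key structural fact: it lets me replace the sum over individual users by a sum over the possible values of that count. Concretely, grouping the users indexed by $S$ according to the value of $h_i$, and recalling that $n_j$ records how many of them received exactly $j$ packets, I would rewrite
\[
\sum_{i\in S}f(h_i,d)=\sum_{j=0}^{M}n_j\,f(j,d).
\]

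Next I would note that the $j=M$ term drops out, since the factor $(M-j)$ in $f(j,d)$ vanishes at $j=M$, so $f(M,d)=0$ for every $d$. This is precisely why the vector $\mathbf{V}=[n_0,\dots,n_{M-1}]^{T}$ need only record the counts $n_0,\dots,n_{M-1}$ and can safely omit $n_M$. After discarding this term the sum runs from $j=0$ to $j=M-1$, and pairing the column of $\mathbf{\Pi}$ indexed by $j$ with the entry $n_j$ of $\mathbf{V}$, the expression $\sum_{j=0}^{M-1}n_j\,f(j,d)$ is by definition the inner product of the $d$-th row of $\mathbf{\Pi}$ with $\mathbf{V}$, i.e. $(\mathbf{\Pi}\times\mathbf{V})_d$. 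Applying $\argmax$ over $1\le d\le M$ to both sides then delivers the stated formula for $\hat{d}$.

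There is no serious obstacle here: the argument is a regrouping of the sum together with a careful alignment of the indices of $\mathbf{\Pi}$ with the degree $d$ (row) and the received-packet count $j$ (column). The only point that demands care is this index bookkeeping---in particular verifying that the column indexed by $j$ is paired with $n_j$, and that the vanishing of the $j=M$ term is exactly what justifies truncating $\mathbf{V}$ at $n_{M-1}$. The payoff of the reformulation, which is the real motivation for the corollary, is that $\mathbf{\Pi}$ depends only on $M$ and can be precomputed, so that evaluating $\hat{d}$ for a given feedback reduces to forming $\mathbf{V}$ and carrying out a single $\mathcal{O}(M^2)$ matrix-vector multiplication whose cost does not grow with the number of users $N$.
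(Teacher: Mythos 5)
Your proof is correct and follows essentially the same route as the paper's: regroup the sum in (\ref{equ:estimate}) by the received-packet count $h_i=j$ so that it becomes $\sum_j n_j\,\mathbf{\Pi}_{d,j}$, and recognize this as $(\mathbf{\Pi}\times\mathbf{V})_d$ before applying $\argmax$. Your explicit observation that the $j=M$ term vanishes because of the factor $(M-j)$ is in fact slightly more careful than the paper's presentation, whose displayed sum runs to $j=M$ even though $\mathbf{V}$ only records $n_0,\ldots,n_{M-1}$.
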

    \begin{proof}
      We have
      \[
      \begin{split}
         \hat{d}
         &=\argmax_{1\leq d \leq M}\left( \sum_{i\in S}  \frac{{h_i \choose d-1}(M-h_i) } {{M \choose d}}\cdot \frac{h_i+1}{M+2} \right)\\
        &=\argmax_{1\leq d \leq M}\left(\sum_{j=0}^{M} \left( \frac{{j \choose d-1}(M-j) } {{M \choose d}}\cdot \frac{j+1}{M+2} \right)\cdot n_j\right)\\
        &=\argmax_{1\leq d \leq M}\left( \sum_{j=0}^{M} \Pi_{d, j}\cdot n_j \right)\\
        &= \argmax_{1\leq d \leq M} \left(\mathbf{\Pi} \times \mathbf{V}\right)_{d}. \\
      \end{split}
      \]
    \end{proof}

\noindent
Algorithm~\ref{alg:general} shows how RACE estimates the optimal degree for the first coded packet.
As shown in the algorithm, the matrix $\mathbf{\Pi}$ is pre-computed, and the matrix $\mathbf{V}$ 
is simply filled in with the collected feedback.
Therefore, the main computation of Algorithm~\ref{alg:general} is to calculate the product $\mathbf{\Pi}\times \mathbf{V}$
which can be done in $\mathcal{O}(M^2)$.

\begin{algorithm}
\caption{Code degree estimation for the first coded packet}
\label{alg:general}

\begin{algorithmic}[1]
\State $\textit{// \textbf{Pre-computation}}$
\State $M \gets |\mathcal{P}|$ \textit{// The umber of packets}
\State $\mathbf{\Pi}\gets \textit{Zero matrix of size $M \times M$ }$
\ForAll{$i \in [1,\ldots,M]$} \textit{//  Number of packets to be XORed}
  \ForAll{$j \in [0,\ldots,M-1]$} 
    \State $\mathbf{\Pi}_{i,j} \gets \left( \frac{{j \choose i-1}(M-j) } {{M \choose i}}\cdot \frac{j+1}{M+2} \right)$
  \EndFor
\EndFor
\State $\textit{// \textbf{Realtime computation}}$
\State $n \gets \textit{Number of feedback samples}$
\State $S \gets collectRandomSamples(n)$
\State $\mathbf{V}\gets \textit{Zero vector of size $M$ }$
\ForAll{$u_i, i \in S$}
  \State $j \gets $ \# packets received by $u_i$ \textit{// Ignore if  $u_i$ has received all}
  \State $\mathbf{V_j} \gets \mathbf{V_j} + 1$ \textit{// Ignore if  $u_i$ has received all}
\EndFor
\State $Gain\gets 0$
\State $Num\gets 0$
\State $\mathbf{G}\gets \mathbf{\Pi} \times \mathbf{V}$
\ForAll{$d \in [1,\ldots,M]$} \textit{// code degree}
  \If{$\mathbf{G}_i > Gain$}
    \State $Gain\gets \mathbf{G}_d$
    \State $\hat{d}\gets d$
  \EndIf
\EndFor
\State \Return $\hat{d}$ 
\end{algorithmic}
\end{algorithm}

\pagebreak
\begin{example}

Suppose the number of packets is $M=10$.
The matrix $\mathbf{\Pi}$ is pre-computed as
\[
\scriptsize
\mathbf{\Pi}=
\begin{bmatrix}

 \mathbf{.08} & \mathbf{.15} & \mathbf{.20} & \mathbf{.23} &
 \mathbf{.25} & \mathbf{.25} & \mathbf{.23} & \mathbf{.20} &
 \mathbf{.15} & \mathbf{.08}  \\
 
 0 & \mathbf{.03} & \mathbf{.09} & \mathbf{.16} &
 \mathbf{.22} & \mathbf{.28} & \mathbf{.31} & \mathbf{.31} &
 \mathbf{.27} & \mathbf{.17} \\
 
 0 & 0 & \mathbf{.02} & \mathbf{.06} &
 \mathbf{.12} & \mathbf{.21} & \mathbf{.29} & \mathbf{.35} &
 \mathbf{.35} & \mathbf{.25} \\
 
 0 & 0 & 0 & \mathbf{.01} &
 \mathbf{.05} & \mathbf{.12} & \mathbf{.22} & \mathbf{.33} &
 \mathbf{.40} & \mathbf{.33}  \\
 
 0 & 0 & 0 & 0 &
 \mathbf{.01} & \mathbf{.05} & \mathbf{.14} & \mathbf{.28} &
 \mathbf{.42} & \mathbf{.42}  \\
 
 0 & 0 & 0 & 0 &
 0 & \mathbf{.01} & \mathbf{.07} & \mathbf{.20} &
 \mathbf{.40} & \mathbf{.50}  \\
 
 0 & 0 & 0 & 0 &
 0 & 0 & \mathbf{.02} & \mathbf{.12} &
 \mathbf{.35} & \mathbf{.58}  \\
 
 0 & 0 & 0 & 0 &
 0 & 0 & 0 & \mathbf{.04} &
 \mathbf{.27} & \mathbf{.67}  \\
 
 0 & 0 & 0 & 0 &
 0 & 0 & 0 & 0 &
 \mathbf{.15} & \mathbf{.75}  \\
 
 0 & 0 & 0 & 0 &
 0 & 0 & 0 & 0 &
 0 & \mathbf{.83}  \\
\end{bmatrix}
\]
Suppose the following vector is the summary of the collected feedback.
\[
\mathbf{V}=
\begin{bmatrix}
0 & 0 & 0 & 1 & 1 & 2 & 4 & 4 & 7 & 9 \\
\end{bmatrix}^T,
\]
For instance, the value of $\mathbf{V}$ at index eight is seven. 
This implies that seven users (out of all users from which feedback was collected) have received 8 packets.
Given $\mathbf{\Pi}$ and $\mathbf{V}$, the product $\mathbf{\Pi}\times \mathbf{V}$ is computed as

\[
\scriptsize
  \begin{bmatrix}
  0.00 & 4.51 & 6.78 & 7.86 & 8.31 & \underbrace{\mathbf{8.44}}_{\text{index } i=5} & 8.39 & \ldots & 7.50 \\
  \end{bmatrix}^T.
\]
The largest element of the above vector is at index $i=5$, thus Algorithm~\ref{alg:general} returns $\hat{d}=5$ as the code degree.

\end{example}

%
%

\textbf{Estimating optimal code degrees for multiple packets.}
    After transmission of each coded packet, the optimal code degree can change.
    To estimate the new optimal code degree, RACE does not request feedback from users  
    as this increases the communications overhead and delay.
    Instead, as shown in Algorithm~\ref{alg:extended}, 
    it estimates the number of packets each user has using its latest information about the user, 
    and the code degree of the last transmitted coded packet.

    To this end, Algorithm~\ref{alg:extended} maintains a $M\times M$ matrix  $\mathbf{D}$, a generalization of the vector $\mathbf{V}$ 
    in Algorithm~\ref{alg:general}. 
    The value of $\mathbf{D}_{i,j}$ is an estimate of the number of users that have received $i$ packets in the initial transmission phase, 
    but currently have $j$ ($j\geq i$) packets because of possible recoveries by the coded packet transmissions.
    Initially, $\mathbf{D}$ is filled with the collected feedback from users, so the initial values of $\mathbf{D}$ are all accurate.
    After each transmission, the values of $\mathbf{D}$ are updated as 
    \[
      \mathbf{D}_{i,j} \leftarrow 
      \mathbf{D}_{i,j} -\alpha_{c,j}\cdot\mathbf{D}_{i,j}
      + \alpha_{c, j-1}\cdot\mathbf{D}_{i,j-1},
    \]
    where 
    \[
      \alpha_{c,j}= \frac{{j \choose c-1}(M-j) } {{M \choose c}}\cdot \frac{i+1}{M+2},
    \]
    and $c$ is the code degree of the last transmitted coded packet.
    By Theorem~\ref{thm:expected}, the parameter $\alpha_{c,j}$ is an estimate of the probability that a node 
    which has received $i$ packets in the initial transmission phase,
    and currently has $j$ packets has benefited from the last transmitted coded packet.
    In other words, $\mathbf{D}_{i,j}$ is an estimate of the number of users that  received $i$ packets in the initial transmission phase
    but currently have $j\geq i$ packets.
    Using the updated matrix $\mathbf{D}$, and by simply applying Theorem~\ref{thm:expected}, 
    Algorithm~\ref{alg:extended} esimates the code degree for the next coded packet.

\begin{algorithm}
\caption{Code degree estimation in RACE}
\label{alg:extended}

\begin{algorithmic}[1]
\State $M \gets |\mathcal{P}|$ \textit{// number of packets}
\State $n \gets \textit{Number of feedback samples}$
\State $r \gets \textit{Number of coded packets}$
\State $R \gets \emptyset$ \textit{// Set of constructed coded packets}
\State $S \gets collectRandomSamples(n)$
\State $\mathbf{D}\gets \textit{Zero matrix of size $(M+1)\times(M+1)$}$
\ForAll{$u_i, i \in S$}
  \State $i \gets $\# packets received by $u_i$ 
  \State $\mathbf{D_{i,i}} \gets \mathbf{D_{i,i}} + 1$
\EndFor
\ForAll{$i \in [1,\ldots,r]$}
  \State $Gain\gets 0$
  \State $Num\gets 0$
  \ForAll{$d \in [1,\ldots,M]$} 
    \State $\mathbf{D'}\gets\mathbf{D}$
    \State $G \gets 0$
    \ForAll{$j \in [0,\ldots,M-1]$} 
      \ForAll{$k \in [j,\ldots,M-1]$} 
        \State $tmpGain \gets \left( \frac{{k \choose d-1}(M-k) } {{M \choose d}}\cdot \frac{j+1}{M+2}\cdot\mathbf{D}_{j,k} \right)$
        \State $G \gets G + tmpGain$
        \State $\mathbf{D'}_{j,k+1} \gets \mathbf{D'}_{j,k+1} + tmpGain$
        \State $\mathbf{D'}_{j,k} \gets \mathbf{D'}_{j,k} - tmpGain$
      \EndFor
    \EndFor
    \If{$G > Gain$}
      \State $Gain\gets G$
      \State $Num\gets d$
      \State $\mathbf{D''}\gets\mathbf{D'}$
    \EndIf
  \EndFor
  \State $\mathbf{D}\gets\mathbf{D''}$
  \State $\hat{d}_i \gets Num$ \textit{// estimated optimal code degree of the $i$th packet}
\EndFor
\State \Return $(\hat{d}_1, \hat{d}_2, \ldots, \hat{d}_r)$
\end{algorithmic}
\end{algorithm}

\textbf{Collecting feedback.} 
To statistically represent the set of receivers/users in the network, the receivers that send feedback are selected uniformly at random.
To collect feedback from the selected users, different mechanisms can be employed.
Following, we explain two possible feedback mechanisms. 

  For the first mechanism, we assume that the transmitter is aware of the IDs of the receivers. 
  This is a reasonable assumption because receivers typically connect to the transmitter to receive the multicast service.
  To request feedback, the transmitter can then randomly select receivers, and embed the list of selected IDs in a packet to notify the selected receivers.
  To minimize collisions, the selected receivers can transmit their feedback in the same order as their IDs appear in the packet.
  Note that even without following such order of transmission, the MAC layers of the selected users are there to handle the contention.
  Also, our solutions require only a small number of feedback.
  Therefore, there is a smaller chance of collisions compare to the case where feedback is collected from all/majority of users.  
  In addition, as shown in the simulations, our solutions are not sensitive to few possible feedback losses.  
  
  For the second mechanism, we assume that the transmitter does not have the IDs of the receivers, but has an estimate of the number of receivers.
  In this case, the transmitter can ask the receivers to send feedback with a probability $p$, set by the transmitter.
  For instance, if the transmitter estimates the number of receivers to be 100, and it wishes to collect about 20 feedback,
  then it can set $p$ to be equal or slightly higher than 0.2.
  This solution has lower communication overhead than the first solution, and does not require knowledge of IDs of the receivers.


\section{Simulation Results }
\label{sec:sim}

\textbf{Comparing RACE with the Optimal RIDNC encoder.}
We first compare the performance of RACE with the optimal RIDNC encoder. 
Recall that, by Theorem~\ref{thm:stat}, in dense network, the optimal RIDNC is capable of recovering virtually as many lost packets as possible by any other recovery solution.

In the optimal RIDNC, the transmitter knows the erasure rates of all users, 
and receives feedback from all users after each transmission to calculate the optimal code degree.
In RACE, however, the optimal code degree is estimated using a one-time feedback (right after the initial transmission phase), 
and without the knowledge of erasure rates.

Figures~\ref{fig:multi_RIDNC10}, \ref{fig:multi_RIDNC20}, and \ref{fig:multi_RIDNC30} compare the gain of RACE and the optimal RIDNC encoder for various
number of coded packet transmissions. 
The gain is calculated as the total number of packets recovered divided by the total number of users, averaged over 1,000 simulation runs.
As shown, RACE performs very close to the optimal RIDNC.
This is despite the fact that RACE has no knowledge of erasure rates, and receives feedback from users only once.

\textbf{Sensitivity of RACE to the number of feedback.}
In the next set of simulations, we evaluate the sensitivity of the gain of RACE to the number of feedback.
Recall that RACE collects feedback only once (before the transmission of the first coded packet).
Figures~\ref{fig:stat} and \ref{fig:multiStat} show the result of our simulations for a dense network with 500 users.
The dashed line shows the gain of RACE when feedback is collected from all users.
The solid curve is the gain of RACE versus the number of feedback requested\footnote{Feedback packets get lost just like other packets.}.
This result shows that RACE needs only a small number of feedback to nearly achieve its full gain.
For instance, it shows that RACE can achieve $90\%$ of its full gain
using feedback from up to $\sim2\%$ of users.

\textbf{Comparing RACE with the CrowdWiFi encoder.}
Among the existing solutions, two were reported to achieve near optimal gain: the recovery algorithm used by 
CrowdWiFi~\cite{ferreira2014real} (referred to as the CrowdWiFi encoder), and the Multi-Slot Max Clique~\cite{le2017recovery}.
The CrowdWiFi encoder has a better runtime\footnote{The runtime of CrowdWiFi encoder was wrongly reported to be exponential in~\cite{le2017recovery}.} 
than Multi-Slot Max Clique, thus we used it as our baseline to evaluate the
gain and speed of RACE. 

CrowdWiFi by Streambolico~\cite{CrowdWiFi} is a commercial system for broadcasting live video in dense networks such as stadiums, and conferences.
A major contributor in the high performance of CrowdWiFi is its encoder, which works as follows.
In the first step, the CrowdWiFi encoder selects a packet, say $c$, that is wanted by the most number of users.
At step $i>1$, the encoder selects a packet $p$ such that $c\oplus p$ is instantly decodable by more users than $c$.
If such a packet $p$ exists, it replaces $c$ with $c\oplus p$ and moves on to the next step; otherwise, it stops and returns $c$ as the 
coded packet.

We implemented the CrowdWiFi encoder, and compared it against RACE.
As shown in Figures~\ref{fig:RvsCGainE20} and~\ref{fig:RvsCGainE10}, 
RACE can recover more lost packets than the CrowdWiFi encoder where the number of feedback is small.
This is an advantage because the transmitter does not have
enough time to collect many feedback when broadcasting for real-time applications.
As the number of collected feedback increases, the gap between the gain of the two encoders reduces,
and at some ``cross point'' the CrowdWiFi encoder can recover the same number of lost packets as RACE.
To get to this cross point, however, a considerable number of feedback has to be collected.
In addition, at this point, the CrowdWiFi encoder is significantly slower than RACE.
For instance, when the number of packets is $M=20$ and the maximum erasure rate is $20\%$, 
the cross point occurs around $0.15\times N$ feedback, 
as illustrated in  Figures~\ref{fig:RvsCGainE20}.
As shown in Figure~\ref{fig:multiStat10}, when M=20 and the maximum erasure rate is $20\%$, 
RACE is about a factor of $3.5F$ faster than the CrowdWiFi encoder, where
$F$ denotes the number of collected feedback.
Combining the two, we get that RACE is  
$3.5 \times (0.15 N)> 0.5 N$ faster than the the CrowdWiFi encoder  at the cross point.
This speedup factor ranges from 50 to 500 in a dense network with 100 to 1000 users.


\pagebreak
\section{Conclusion}
\label{sec:Con}
   
   In dense networks, it is not  practical to collect feedback from all or a large percentage of users.
   This is especially the case when broadcasting for real-time applications, which have urgent delivery deadlines.
   Considering this limitation, we proposed RACE, a fast and light random IDNC (RIDNC) encoder.
   We proved that, in dense networks, the optimal RIDNC encoder is capable of recovering nearly as many lost packets as possible
   by any other solution.
   Then, using simulations, we showed that RACE can recover virtually as many lost packets as the optimal RIDNC.
   Also, we compared RACE with the CrowdWiFi encoder, a fast and high performing packet recovery solution used in CrowdWiFi,
   a commercial system for broadcasting video in dense networks.
   The results show that RACE can recover more lost packets than the CrowdWiFi encoder in practice, 
   where there is not enough time to collect many feedback.
   In addition, RACE is up to two orders of magnitude faster than the CrowdWiFi encoder.   
   These make RACE a better solution than the state-of-the-art for recovering lost packets in broadcast for real-time applications 
   in dense networks.

%
%
%
%


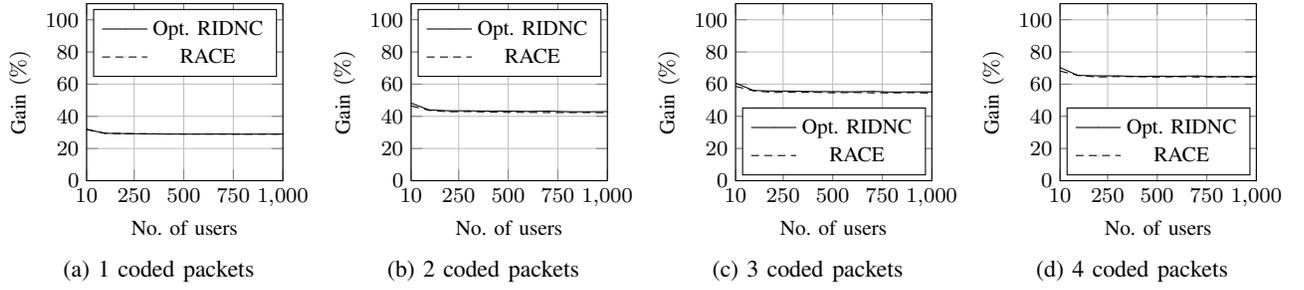
\begin{figure*}
\centering
\begin{subfigure}[b]{1.65in}
\centering
\footnotesize
\begin{tikzpicture}
\begin{axis}[ xlabel=No. of users,
              ylabel=Gain $(\%)$,
              grid=major,
              log ticks with fixed point,
              xmin=10,
              xmax=1000,
              xtick={10,250,500,750,1000},
              ymin=0,
              ymax=110,
              ytick={0,20,...,100},
              height=1.55in,
              width=1.65in,
              legend pos=north east,
              y tick label style={/pgf/number format/fixed,
              /pgf/number format/1000 sep = \thinspace}
            ]

\addlegendimage{color=black,mark=*,mark size=0}
\addlegendentry{Opt. RIDNC} 
\addlegendimage{densely dashed,color=black,mark=*,mark size=0}
\addlegendentry{RACE} 

\addplot[color=black,mark=*,mark size=0] table {Data/OPT.RIDNC.M20.E10.P1.dat}; 
\addplot[densely dashed,color=black,mark=*,mark size=0] table {Data/RACE.M20.E10.P1.dat}; 

\end{axis}
\end{tikzpicture}

\caption{1 coded packets}
\label{fig:multi_RIDNC10_2}
\end{subfigure}
\begin{subfigure}[b]{1.65in}
\centering
\footnotesize
\begin{tikzpicture}
\begin{axis}[ xlabel=No. of users,
              ylabel=Gain $(\%)$,
              grid=major,
              log ticks with fixed point,
              xmin=10,
              xmax=1000,
              xtick={10,250,500,750,1000},
              ymin=0,
              ymax=110,
              ytick={0,20,...,100},
              height=1.55in,
              width=1.65in,
              legend pos=north east,
              y tick label style={/pgf/number format/fixed,
              /pgf/number format/1000 sep = \thinspace}
            ]

\addlegendimage{color=black,mark=*,mark size=0}
\addlegendentry{Opt. RIDNC} 
\addlegendimage{densely dashed,color=black,mark=*,mark size=0}
\addlegendentry{RACE} 

\addplot[color=black,mark=*,mark size=0] table {Data/OPT.RIDNC.M20.E10.P2.dat}; 
\addplot[densely dashed,color=black,mark=*,mark size=0] table {Data/RACE.M20.E10.P2.dat}; 

\end{axis}
\end{tikzpicture}

\caption{2 coded packets}
\label{fig:multi_RIDNC10_2}
\end{subfigure}
\begin{subfigure}[b]{1.65in}
\centering
\footnotesize
\begin{tikzpicture}
\begin{axis}[ xlabel=No. of users,
              ylabel=Gain $(\%)$,
              grid=major,
              log ticks with fixed point,
              xmin=10,
              xmax=1000,
              xtick={10,250,500,750,1000},
              ymin=0,
              ymax=110,
              ytick={0,20,...,100},
              height=1.55in,
              width=1.65in,
              legend pos=south east,
              y tick label style={/pgf/number format/fixed,
              /pgf/number format/1000 sep = \thinspace}
            ]

\addlegendimage{color=black,mark=*,mark size=0}
\addlegendentry{Opt. RIDNC} 
\addlegendimage{densely dashed,color=black,mark=*,mark size=0}
\addlegendentry{RACE}

\addplot[color=black,mark=*,mark size=0] table {Data/OPT.RIDNC.M20.E10.P3.dat}; 
\addplot[densely dashed,color=black,mark=*,mark size=0] table {Data/RACE.M20.E10.P3.dat}; 

\end{axis}
\end{tikzpicture}

\caption{3 coded packets}
\label{fig:multi_RIDNC10_4}
\end{subfigure}
\begin{subfigure}[b]{1.65in}
\centering
\footnotesize
\begin{tikzpicture}
\begin{axis}[ xlabel=No. of users,
              ylabel=Gain $(\%)$,
              grid=major,
              log ticks with fixed point,
              xmin=10,
              xmax=1000,
              xtick={10,250,500,750,1000},
              ymin=0,
              ymax=110,
              ytick={0,20,...,100},
              height=1.55in,
              width=1.65in,
              legend pos=south east,
              y tick label style={/pgf/number format/fixed,
              /pgf/number format/1000 sep = \thinspace}
            ]

\addlegendimage{color=black,mark=*,mark size=0}
\addlegendentry{Opt. RIDNC} 
\addlegendimage{densely dashed,color=black,mark=*,mark size=0}
\addlegendentry{RACE}

\addplot[color=black,mark=*,mark size=0] table {Data/OPT.RIDNC.M20.E10.P4.dat}; 
\addplot[densely dashed,color=black,mark=*,mark size=0] table {Data/RACE.M20.E10.P4.dat};  

\end{axis}
\end{tikzpicture}

\caption{4 coded packets}
\label{fig:multi_RIDNC10_8}
\end{subfigure}
\caption{
 Comparing the gain of RACE and the optimal RINDC.  
 The number of packets is $M=20$ and the erasure rates are uniformly distributed in the interval $(0,0.1)$. 
 }
\label{fig:multi_RIDNC10}
\end{figure*}


\begin{figure*}
\centering
\begin{subfigure}[b]{1.65in}
\centering
\footnotesize
\begin{tikzpicture}
\begin{axis}[ xlabel=No. of users,
              ylabel=Gain $(\%)$,
              grid=major,
              log ticks with fixed point,
              xmin=10,
              xmax=1000,
              xtick={10,250,500,750,1000},
              ymin=0,
              ymax=110,
              ytick={0,20,...,100},
              height=1.55in,
              width=1.65in,
              legend pos=north east,
              y tick label style={/pgf/number format/fixed,
              /pgf/number format/1000 sep = \thinspace}
            ]

\addlegendimage{color=black,mark=*,mark size=0}
\addlegendentry{Opt. RIDNC} 
\addlegendimage{densely dashed,color=black,mark=*,mark size=0}
\addlegendentry{RACE} 

\addplot[color=black,mark=*,mark size=0] table {Data/OPT.RIDNC.M20.E20.P1.dat}; 
\addplot[densely dashed,color=black,mark=*,mark size=0] table {Data/RACE.M20.E20.P1.dat}; 

\end{axis}
\end{tikzpicture}

\caption{1 coded packets}
\label{fig:multi_RIDNC10_2}
\end{subfigure}
\begin{subfigure}[b]{1.65in}
\centering
\footnotesize
\begin{tikzpicture}
\begin{axis}[ xlabel=No. of users,
              ylabel=Gain $(\%)$,
              grid=major,
              log ticks with fixed point,
              xmin=10,
              xmax=1000,
              xtick={10,250,500,750,1000},
              ymin=0,
              ymax=110,
              ytick={0,20,...,100},
              height=1.55in,
              width=1.65in,
              legend pos=north east,
              y tick label style={/pgf/number format/fixed,
              /pgf/number format/1000 sep = \thinspace}
            ]

\addlegendimage{color=black,mark=*,mark size=0}
\addlegendentry{Opt. RIDNC} 
\addlegendimage{densely dashed,color=black,mark=*,mark size=0}
\addlegendentry{RACE} 

\addplot[color=black,mark=*,mark size=0] table {Data/OPT.RIDNC.M20.E20.P2.dat}; 
\addplot[densely dashed,color=black,mark=*,mark size=0] table {Data/RACE.M20.E20.P2.dat}; 

\end{axis}
\end{tikzpicture}

\caption{2 coded packets}
\label{fig:multi_RIDNC10_2}
\end{subfigure}
\begin{subfigure}[b]{1.65in}
\centering
\footnotesize
\begin{tikzpicture}
\begin{axis}[ xlabel=No. of users,
              ylabel=Gain $(\%)$,
              grid=major,
              log ticks with fixed point,
              xmin=10,
              xmax=1000,
              xtick={10,250,500,750,1000},
              ymin=0,
              ymax=110,
              ytick={0,20,...,100},
              height=1.55in,
              width=1.65in,
              legend pos=south east,
              y tick label style={/pgf/number format/fixed,
              /pgf/number format/1000 sep = \thinspace}
            ]

\addlegendimage{color=black,mark=*,mark size=0}
\addlegendentry{Opt. RIDNC} 
\addlegendimage{densely dashed,color=black,mark=*,mark size=0}
\addlegendentry{RACE}

\addplot[color=black,mark=*,mark size=0] table {Data/OPT.RIDNC.M20.E20.P3.dat}; 
\addplot[densely dashed,color=black,mark=*,mark size=0] table {Data/RACE.M20.E20.P3.dat}; 

\end{axis}
\end{tikzpicture}

\caption{3 coded packets}
\label{fig:multi_RIDNC10_4}
\end{subfigure}
\begin{subfigure}[b]{1.65in}
\centering
\footnotesize
\begin{tikzpicture}
\begin{axis}[ xlabel=No. of users,
              ylabel=Gain $(\%)$,
              grid=major,
              log ticks with fixed point,
              xmin=10,
              xmax=1000,
              xtick={10,250,500,750,1000},
              ymin=0,
              ymax=110,
              ytick={0,20,...,100},
              height=1.55in,
              width=1.65in,
              legend pos=south east,
              y tick label style={/pgf/number format/fixed,
              /pgf/number format/1000 sep = \thinspace}
            ]

\addlegendimage{color=black,mark=*,mark size=0}
\addlegendentry{Opt. RIDNC} 
\addlegendimage{densely dashed,color=black,mark=*,mark size=0}
\addlegendentry{RACE}

\addplot[color=black,mark=*,mark size=0] table {Data/OPT.RIDNC.M20.E20.P4.dat}; 
\addplot[densely dashed,color=black,mark=*,mark size=0] table {Data/RACE.M20.E20.P4.dat}; 

\end{axis}
\end{tikzpicture}

\caption{4 coded packets}
\label{fig:multi_RIDNC10_8}
\end{subfigure}
\caption{
 Comparing the gain of RACE and the optimal RINDC.  
 The number of packets is $M=20$ and the erasure rates are uniformly distributed in the interval $(0,0.2)$. 
 }
\label{fig:multi_RIDNC20}
\end{figure*}


\begin{figure*}
\centering
\begin{subfigure}[b]{1.65in}
\centering
\footnotesize
\begin{tikzpicture}
\begin{axis}[ xlabel=No. of users,
              ylabel=Gain $(\%)$,
              grid=major,
              log ticks with fixed point,
              xmin=10,
              xmax=1000,
              xtick={10,250,500,750,1000},
              ymin=0,
              ymax=110,
              ytick={0,20,...,100},
              height=1.55in,
              width=1.65in,
              legend pos=north east,
              y tick label style={/pgf/number format/fixed,
              /pgf/number format/1000 sep = \thinspace}
            ]

\addlegendimage{color=black,mark=*,mark size=0}
\addlegendentry{Opt. RIDNC} 
\addlegendimage{densely dashed,color=black,mark=*,mark size=0}
\addlegendentry{RACE} 

\addplot[color=black,mark=*,mark size=0] table {Data/OPT.RIDNC.M20.E30.P1.dat}; 
\addplot[densely dashed,color=black,mark=*,mark size=0] table {Data/RACE.M20.E30.P1.dat}; 

\end{axis}
\end{tikzpicture}

\caption{1 coded packets}
\label{fig:multi_RIDNC10_2}
\end{subfigure}
\begin{subfigure}[b]{1.65in}
\centering
\footnotesize
\begin{tikzpicture}
\begin{axis}[ xlabel=No. of users,
              ylabel=Gain $(\%)$,
              grid=major,
              log ticks with fixed point,
              xmin=10,
              xmax=1000,
              xtick={10,250,500,750,1000},
              ymin=0,
              ymax=110,
              ytick={0,20,...,100},
              height=1.55in,
              width=1.65in,
              legend pos=north east,
              y tick label style={/pgf/number format/fixed,
              /pgf/number format/1000 sep = \thinspace}
            ]

\addlegendimage{color=black,mark=*,mark size=0}
\addlegendentry{Opt. RIDNC} 
\addlegendimage{densely dashed,color=black,mark=*,mark size=0}
\addlegendentry{RACE} 

\addplot[color=black,mark=*,mark size=0] table {Data/OPT.RIDNC.M20.E30.P2.dat}; 
\addplot[densely dashed,color=black,mark=*,mark size=0] table {Data/RACE.M20.E30.P2.dat}; 

\end{axis}
\end{tikzpicture}

\caption{2 coded packets}
\label{fig:multi_RIDNC10_2}
\end{subfigure}
\begin{subfigure}[b]{1.65in}
\centering
\footnotesize
\begin{tikzpicture}
\begin{axis}[ xlabel=No. of users,
              ylabel=Gain $(\%)$,
              grid=major,
              log ticks with fixed point,
              xmin=10,
              xmax=1000,
              xtick={10,250,500,750,1000},
              ymin=0,
              ymax=110,
              ytick={0,20,...,100},
              height=1.55in,
              width=1.65in,
              legend pos=south east,
              y tick label style={/pgf/number format/fixed,
              /pgf/number format/1000 sep = \thinspace}
            ]

\addlegendimage{color=black,mark=*,mark size=0}
\addlegendentry{Opt. RIDNC} 
\addlegendimage{densely dashed,color=black,mark=*,mark size=0}
\addlegendentry{RACE}

\addplot[color=black,mark=*,mark size=0] table {Data/OPT.RIDNC.M20.E30.P3.dat}; 
\addplot[densely dashed,color=black,mark=*,mark size=0] table {Data/RACE.M20.E30.P3.dat}; 

\end{axis}
\end{tikzpicture}

\caption{3 coded packets}
\label{fig:multi_RIDNC10_4}
\end{subfigure}
\begin{subfigure}[b]{1.65in}
\centering
\footnotesize
\begin{tikzpicture}
\begin{axis}[ xlabel=No. of users,
              ylabel=Gain $(\%)$,
              grid=major,
              log ticks with fixed point,
              xmin=10,
              xmax=1000,
              xtick={10,250,500,750,1000},
              ymin=0,
              ymax=110,
              ytick={0,20,...,100},
              height=1.55in,
              width=1.65in,
              legend pos=south east,
              y tick label style={/pgf/number format/fixed,
              /pgf/number format/1000 sep = \thinspace}
            ]

\addlegendimage{color=black,mark=*,mark size=0}
\addlegendentry{Opt. RIDNC} 
\addlegendimage{densely dashed,color=black,mark=*,mark size=0}
\addlegendentry{RACE}

\addplot[color=black,mark=*,mark size=0] table {Data/OPT.RIDNC.M20.E30.P4.dat}; 
\addplot[densely dashed,color=black,mark=*,mark size=0] table {Data/RACE.M20.E30.P4.dat}; 

\end{axis}
\end{tikzpicture}

\caption{4 coded packets}
\label{fig:multi_RIDNC10_8}
\end{subfigure}
\caption{
 Comparing the gain of RACE and the optimal RINDC.  
 The number of packets is $M=20$ and the erasure rates are uniformly distributed in the interval $(0,0.3)$. 
 }
\label{fig:multi_RIDNC30}
\end{figure*}


\begin{figure*}
\centering
\begin{subfigure}[b]{2.3in}
\centering
\footnotesize
\begin{tikzpicture}
\begin{axis}[ xlabel=No. of feedback,
              ylabel=Gain $(\%)$,
              grid=major,
              xmin=0,
              xmax=100,
              xtick={1,20,40,...,100},
              ymin=0,
              ymax=40,
              ytick={0,10,20,...,40},
              height=1.55in,
              width=2.3in,
              legend pos=south east,
              y tick label style={/pgf/number format/fixed,
              /pgf/number format/1000 sep = \thinspace}
            ] 

\addlegendimage{densely dashed,color=black,mark=*,mark size=0}
\addlegendentry{\scriptsize Feedback from all users} 
\addlegendimage{color=black,mark=*,mark size=0}
\addlegendentry{\scriptsize Varied no. of feedback}

\addplot[color=black,mark=*,mark size=0] table {Data/RACE.M20.N500.E10.P1.dat}; 
\addplot[densely dashed,color=black,mark=*,mark size=0,domain=0:100] {28.9760};

\end{axis}

\end{tikzpicture}

\caption{$\mathcal{E}\leq 10\%$.}
\label{fig:stat10}
\end{subfigure}
\begin{subfigure}[b]{2.3in}
\centering
\footnotesize
\begin{tikzpicture}
\begin{axis}[ xlabel=No. of feedback,
              ylabel=Gain $(\%)$,
              grid=major,
              xmin=0,
              xmax=100,
              xtick={1,20,40,...,100},
              ymin=0,
              ymax=40,
              ytick={0,10,20,...,40},
              height=1.55in,
              width=2.3in,
              legend pos=south east,
              y tick label style={/pgf/number format/fixed,
              /pgf/number format/1000 sep = \thinspace}
            ] 

\addlegendimage{densely dashed,color=black,mark=*,mark size=0}
\addlegendentry{\scriptsize Feedback from all users} 
\addlegendimage{color=black,mark=*,mark size=0}
\addlegendentry{\scriptsize Varied no. of feedback} 

\addplot[color=black,mark=*,mark size=0] table {Data/RACE.M20.N500.E20.P1.dat}; 
\addplot[densely dashed,color=black,mark=*,mark size=0,domain=0:100] {27.7386};
\end{axis}
\end{tikzpicture}

\caption{$\mathcal{E}\leq 20\%$.}
\label{fig:stat20}
\end{subfigure}
\begin{subfigure}[b]{2.3in}
\centering
\footnotesize
\begin{tikzpicture}
\begin{axis}[ xlabel=No. of feedback,
              ylabel=Gain $(\%)$,
              grid=major,
              xmin=0,
              xmax=100,
              xtick={1,20,40,...,100},
              ymin=0,
              ymax=40,
              ytick={0,10,20,...,40},
              height=1.55in,
              width=2.3in,
              legend pos=south east,
              y tick label style={/pgf/number format/fixed,
              /pgf/number format/1000 sep = \thinspace}
            ] 

\addlegendimage{densely dashed,color=black,mark=*,mark size=0}
\addlegendentry{\scriptsize Feedback from all users} 
\addlegendimage{color=black,mark=*,mark size=0}
\addlegendentry{\scriptsize Varied no. of feedback} 

\addplot[color=black,mark=*,mark size=0] table {Data/RACE.M20.N500.E30.P1.dat}; 
\addplot[densely dashed,color=black,mark=*,mark size=0,domain=0:100] {26.5884};
\end{axis}
\end{tikzpicture}

\caption{$\mathcal{E}\leq 30\%$.}
\label{fig:stat30}
\end{subfigure}
\caption{With up to 4, 8, and 13 feedback respectively, RACE achieves at least $90\%$ of its full gain.  
This result is independent of the number of users. Here, we set the number of users to 500, and the number of packets to $M=20$.}
\label{fig:stat}
\end{figure*}
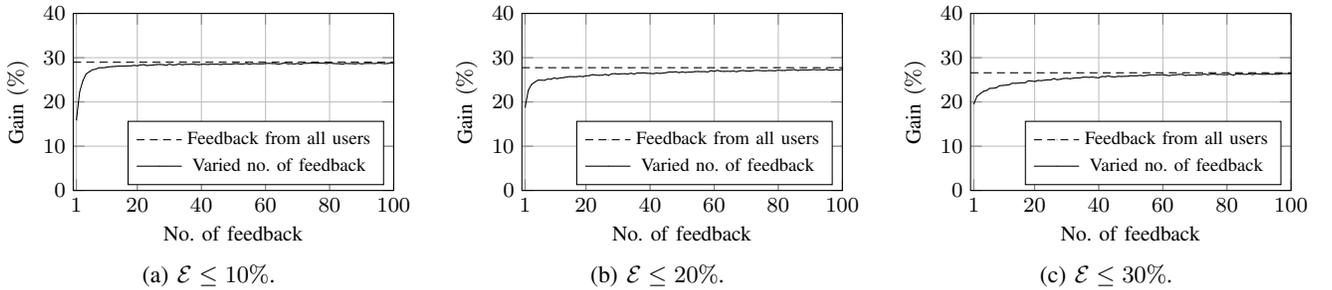


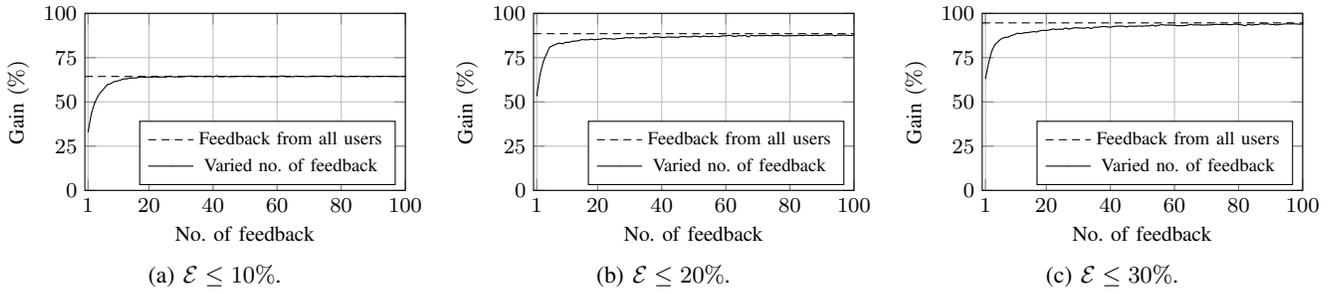
\begin{figure*}
\centering
\begin{subfigure}[b]{2.3in}
\centering
\footnotesize
\begin{tikzpicture}
\begin{axis}[ xlabel=No. of feedback,
              ylabel=Gain $(\%)$,
              grid=major,
              xmin=0,
              xmax=100,
              xtick={1,20,40,...,100},
              ymin=0,
              ymax=100,
              ytick={0,25,...,100},
              height=1.55in,
              width=2.3in,
              legend pos=south east,
              y tick label style={/pgf/number format/fixed,
              /pgf/number format/1000 sep = \thinspace}
            ] 

\addlegendimage{densely dashed,color=black,mark=*,mark size=0}
\addlegendentry{\scriptsize Feedback from all users} 
\addlegendimage{color=black,mark=*,mark size=0}
\addlegendentry{\scriptsize Varied no. of feedback} 

\addplot[color=black,mark=*,mark size=0] table {Data/RACE.M20.N500.E10.P4.dat}; 
\addplot[densely dashed,color=black,mark=*,mark size=0,domain=0:100] {64.3726};

\end{axis}

\end{tikzpicture}

\caption{$\mathcal{E}\leq 10\%$.}
\label{fig:multiStat10}
\end{subfigure}
\begin{subfigure}[b]{2.3in}
\centering
\footnotesize
\begin{tikzpicture}
\begin{axis}[ xlabel=No. of feedback,
              ylabel=Gain $(\%)$,
              grid=major,
              xmin=0,
              xmax=100,
              xtick={1,20,40,...,100},
              ymin=0,
              ymax=100,
              ytick={0,25,...,100},
              height=1.55in,
              width=2.3in,
              legend pos=south east,
              y tick label style={/pgf/number format/fixed,
              /pgf/number format/1000 sep = \thinspace}
            ] 

\addlegendimage{densely dashed,color=black,mark=*,mark size=0}
\addlegendentry{\scriptsize Feedback from all users} 
\addlegendimage{color=black,mark=*,mark size=0}
\addlegendentry{\scriptsize Varied no. of feedback}

\addplot[color=black,mark=*,mark size=0] table {Data/RACE.M20.N500.E20.P4.dat}; 
\addplot[densely dashed,color=black,mark=*,mark size=0,domain=0:100] {88.5300};
\end{axis}
\end{tikzpicture}

\caption{$\mathcal{E}\leq 20\%$.}
\label{fig:multiStat20}
\end{subfigure}
\begin{subfigure}[b]{2.3in}
\centering
\footnotesize
\begin{tikzpicture}
\begin{axis}[ xlabel=No. of feedback,
              ylabel=Gain $(\%)$,
              grid=major,
              xmin=0,
              xmax=100,
              xtick={1,20,40,...,100},
              ymin=0,
              ymax=100,
              ytick={0,25,...,100},
              height=1.55in,
              width=2.3in,
              legend pos=south east,
              y tick label style={/pgf/number format/fixed,
              /pgf/number format/1000 sep = \thinspace}
            ] 

\addlegendimage{densely dashed,color=black,mark=*,mark size=0}
\addlegendentry{\scriptsize Feedback from all users} 
\addlegendimage{color=black,mark=*,mark size=0}
\addlegendentry{\scriptsize Varied no. of feedback} 

\addplot[color=black,mark=*,mark size=0] table {Data/RACE.M20.N500.E30.P4.dat}; 
\addplot[densely dashed,color=black,mark=*,mark size=0,domain=0:100] {94.6216};
\end{axis}
\end{tikzpicture}

\caption{$\mathcal{E}\leq 30\%$.}
\label{fig:multiStat30}
\end{subfigure}
\caption{With up to 7, 5, and 6  feedback respectively, RACE achieves at least $90\%$ of its full gain.
  This result is independent of the number of users. 
  Here, we set $N=500$, $M=20$, and the number of coded packets to four.
}
\label{fig:multiStat}
\end{figure*}


\begin{figure*}
\centering
\begin{subfigure}[b]{2.3in}
\centering
\footnotesize
\begin{tikzpicture}
\begin{axis}[ xlabel=No. of feedback,
              ylabel=Gain $(\%)$,
              grid=major,
              xmin=0,
              xmax=50,
              xtick={1,10,20,...,50},
              ymin=0,
              ymax=40,
              height=1.55in,
              width=2.3in,
              legend pos=south east,
              y tick label style={/pgf/number format/fixed,
              /pgf/number format/1000 sep = \thinspace}
            ] 

\addlegendimage{densely dashed,color=black,mark=*,mark size=0}
\addlegendentry{RACE} 
\addlegendimage{color=black,mark=*,mark size=0}
\addlegendentry{CrowdWiFi} 

\addplot[color=black,mark=*,mark size=0] table {Data/CROWD.M20.N100.E20.P1.dat}; 
\addplot[densely dashed,color=black,mark=*,mark size=0] table {Data/RACE.M20.N100.E20.P1.dat}; 

\end{axis}

\end{tikzpicture}

\caption{100 Users.}
\label{fig:multiStat10}
\end{subfigure}
\begin{subfigure}[b]{2.3in}
\centering
\footnotesize
\begin{tikzpicture}
\begin{axis}[ xlabel=No. of feedback,
              ylabel=Gain $(\%)$,
              grid=major,
              xmin=0,
              xmax=100,
              xtick={1,20,40,...,100},
              ymin=0,
              ymax=40,
              height=1.55in,
              width=2.3in,
              legend pos=south east,
              y tick label style={/pgf/number format/fixed,
              /pgf/number format/1000 sep = \thinspace}
            ] 

\addlegendimage{densely dashed,color=black,mark=*,mark size=0}
\addlegendentry{RACE} 
\addlegendimage{color=black,mark=*,mark size=0}
\addlegendentry{CrowdWiFi} 

\addplot[color=black,mark=*,mark size=0] table {Data/CROWD.M20.N500.E20.P1.dat}; 
\addplot[densely dashed,color=black,mark=*,mark size=0] table {Data/RACE.M20.N500.E20.P1.dat}; 

\end{axis}
\end{tikzpicture}

\caption{500 Users.}
\label{fig:multiStat20}
\end{subfigure}
\begin{subfigure}[b]{2.3in}
\centering
\footnotesize
\begin{tikzpicture}
\begin{axis}[ xlabel=No. of feedback,
              ylabel=Gain $(\%)$,
              grid=major,
              xmin=0,
              xmax=200,
              xtick={1,40,80,...,200},
              ymin=0,
              ymax=40,
              height=1.55in,
              width=2.3in,
              legend pos=south east,
              y tick label style={/pgf/number format/fixed,
              /pgf/number format/1000 sep = \thinspace}
            ] 

\addlegendimage{densely dashed,color=black,mark=*,mark size=0}
\addlegendentry{RACE} 
\addlegendimage{color=black,mark=*,mark size=0}
\addlegendentry{CrowdWiFi} 

\addplot[color=black,mark=*,mark size=0] table {Data/CROWD.M20.N1000.E20.P1.dat}; 
\addplot[densely dashed,color=black,mark=*,mark size=0] table {Data/RACE.M20.N1000.E20.P1.dat}; 

\end{axis}
\end{tikzpicture}
\caption{1000 Users.}
\label{fig:multiStat30}
\end{subfigure}
\caption{The CrowdWiFi encoder requires feedback from about $15\%$ of all users to 
recover as many lost packets as RACE.
 The number of packets is $M=20$ and the erasure rates are uniformly distributed in the interval $(0,0.2)$. 
}
\label{fig:RvsCGainE20}
\end{figure*}

\begin{figure*}
\centering
\begin{subfigure}[b]{2.3in}
\centering
\footnotesize
\begin{tikzpicture}
\begin{axis}[ xlabel=No. of feedback,
              ylabel=Gain $(\%)$,
              grid=major,
              xmin=0,
              xmax=50,
              xtick={1,10,20,...,50},
              ymin=0,
              ymax=40,
              height=1.55in,
              width=2.3in,
              legend pos=south east,
              y tick label style={/pgf/number format/fixed,
              /pgf/number format/1000 sep = \thinspace}
            ] 

\addlegendimage{densely dashed,color=black,mark=*,mark size=0}
\addlegendentry{RACE} 
\addlegendimage{color=black,mark=*,mark size=0}
\addlegendentry{CrowdWiFi} 

\addplot[color=black,mark=*,mark size=0] table {Data/CROWD.M10.N100.E20.P1.dat}; 
\addplot[densely dashed,color=black,mark=*,mark size=0] table {Data/RACE.M10.N100.E20.P1.dat}; 

\end{axis}

\end{tikzpicture}

\caption{100 Users.}
\label{fig:multiStat10}
\end{subfigure}
\begin{subfigure}[b]{2.3in}
\centering
\footnotesize
\begin{tikzpicture}
\begin{axis}[ xlabel=No. of feedback,
              ylabel=Gain $(\%)$,
              grid=major,
              xmin=0,
              xmax=250,
              xtick={1,50,100,...,250},
              ymin=0,
              ymax=40,
              height=1.55in,
              width=2.3in,
              legend pos=south east,
              y tick label style={/pgf/number format/fixed,
              /pgf/number format/1000 sep = \thinspace}
            ] 

\addlegendimage{densely dashed,color=black,mark=*,mark size=0}
\addlegendentry{RACE} 
\addlegendimage{color=black,mark=*,mark size=0}
\addlegendentry{CrowdWiFi} 

\addplot[color=black,mark=*,mark size=0] table {Data/CROWD.M10.N500.E20.P1.dat}; 
\addplot[densely dashed,color=black,mark=*,mark size=0] table {Data/RACE.M10.N500.E20.P1.dat}; 

\end{axis}
\end{tikzpicture}

\caption{500 Users.}
\label{fig:multiStat20}
\end{subfigure}
\begin{subfigure}[b]{2.3in}
\centering
\footnotesize
\begin{tikzpicture}
\begin{axis}[ xlabel=No. of feedback,
              ylabel=Gain $(\%)$,
              grid=major,
              xmin=0,
              xmax=500,
              xtick={1,100,200,...,500},
              ymin=0,
              ymax=40,
              height=1.55in,
              width=2.3in,
              legend pos=south east,
              y tick label style={/pgf/number format/fixed,
              /pgf/number format/1000 sep = \thinspace}
            ] 

\addlegendimage{densely dashed,color=black,mark=*,mark size=0}
\addlegendentry{RACE} 
\addlegendimage{color=black,mark=*,mark size=0}
\addlegendentry{CrowdWiFi} 

\addplot[color=black,mark=*,mark size=0] table {Data/CROWD.M10.N1000.E20.P1.dat}; 
\addplot[densely dashed,color=black,mark=*,mark size=0] table {Data/RACE.M10.N1000.E20.P1.dat}; 

\end{axis}
\end{tikzpicture}
\caption{1000 Users.}
\label{fig:multiStat30}
\end{subfigure}
\caption{The CrowdWiFi encoder requires feedback from about $40\%$  of all users to 
recover as many lost packets as RACE.
 The number of packets is $M=10$ and the erasure rates are uniformly distributed in the interval $(0,0.2)$. 
}
\label{fig:RvsCGainE10}
\end{figure*}

\begin{figure*}
\centering
\begin{subfigure}[b]{2.3in}
\centering
\footnotesize
\begin{tikzpicture}
\begin{axis}[ xlabel=No. of feedback,
              ylabel=Speedup,
              grid=major,
              xmin=0,
              xmax=100,
              xtick={1,20,40,...,100},
              ymin=0,
              ymax=600,
              ytick={0,150,300,450,600},
              height=1.55in,
              width=2.3in,
              legend pos=north west,
              y tick label style={/pgf/number format/fixed,
              /pgf/number format/1000 sep = \thinspace}
            ] 

\addlegendimage{densely dashed,color=black,mark=*,mark size=0}
\addlegendentry{$M=10$} 
\addlegendimage{color=black,mark=*,mark size=0}
\addlegendentry{$M=20$}

\addplot[densely dashed,color=black,mark=*,mark size=0] table {Data/RvsC.TimeRatio.M10.N500.E10.dat};  
\addplot[color=black,mark=*,mark size=0] table {Data/RvsC.TimeRatio.M20.N500.E10.dat};  

\end{axis}

\end{tikzpicture}

\caption{$\mathcal{E}\leq 10\%$.}
\label{fig:multiStat10}
\end{subfigure}
\begin{subfigure}[b]{2.3in}
\centering
\footnotesize
\begin{tikzpicture}
\begin{axis}[ xlabel=No. of feedback,
              ylabel=Speedup,
              grid=major,
              xmin=0,
              xmax=100,
              xtick={1,20,40,...,100},
              ymin=0,
              ymax=600,
              ytick={0,150,300,450,600},
              height=1.55in,
              width=2.3in,
              legend pos=north west,
              y tick label style={/pgf/number format/fixed,
              /pgf/number format/1000 sep = \thinspace}
            ] 

\addlegendimage{densely dashed,color=black,mark=*,mark size=0}
\addlegendentry{$M=10$} 
\addlegendimage{color=black,mark=*,mark size=0}
\addlegendentry{$M=20$}

\addplot[densely dashed,color=black,mark=*,mark size=0] table {Data/RvsC.TimeRatio.M10.N500.E20.dat}; 
\addplot[color=black,mark=*,mark size=0] table {Data/RvsC.TimeRatio.M20.N500.E20.dat};  

\end{axis}

\end{tikzpicture}

\caption{$\mathcal{E}\leq 20\%$.}
\label{fig:multiStat10}
\end{subfigure}
\begin{subfigure}[b]{2.3in}
\centering
\footnotesize
\begin{tikzpicture}
\begin{axis}[ xlabel=No. of feedback,
              ylabel=Speedup,
              grid=major,
              xmin=0,
              xmax=100,
              xtick={1,20,40,...,100},
              ymin=0,
              ymax=600,
              ytick={0,150,300,450,600},
              height=1.55in,
              width=2.3in,
              legend pos=north west,
              y tick label style={/pgf/number format/fixed,
              /pgf/number format/1000 sep = \thinspace}
            ] 

\addlegendimage{densely dashed,color=black,mark=*,mark size=0}
\addlegendentry{$M=10$} 
\addlegendimage{color=black,mark=*,mark size=0}
\addlegendentry{$M=20$}

\addplot[densely dashed,color=black,mark=*,mark size=0] table {Data/RvsC.TimeRatio.M10.N500.E30.dat};  
\addplot[color=black,mark=*,mark size=0] table {Data/RvsC.TimeRatio.M20.N500.E30.dat};  

\end{axis}

\end{tikzpicture}

\caption{$\mathcal{E}\leq 30\%$.}
\label{fig:multiStat10c}
\end{subfigure}
\caption{Speedup: the ratio of runtime of the CrowdWiFi encoder over that of RACE.
}
\label{fig:RvsCTime}
\end{figure*}
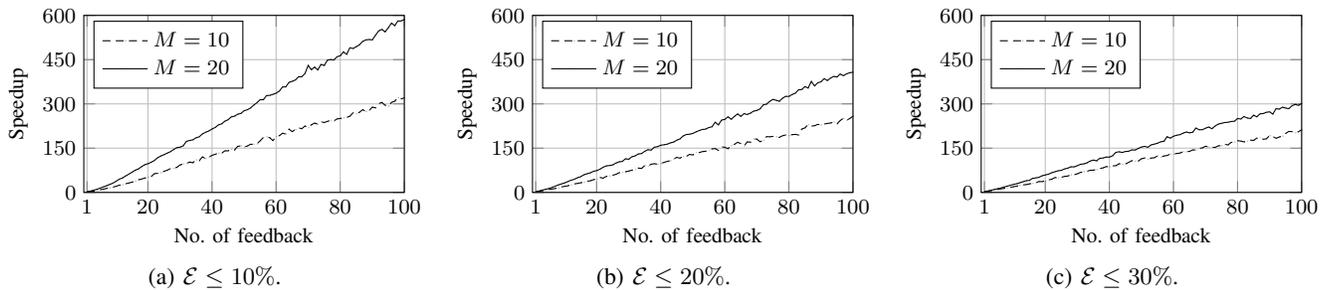



 \begin{appendices}

\clearpage
\section{Proof of Theorem~\ref{thm:stat} }
\label{app:stat}
    Let $\mathcal{C}$ denote the set of packets XORed to generate the coded packet, and 
    $x_{i,{\mathcal{C}}}$, $1\leq i \leq N$, $\mathcal{C}\subseteq \mathcal{P}$, be a binary random variable such that 
    \[
      x_{i,{\mathcal{C}}}=
      \begin{cases}
        1		& \text{if the coded packet 1) is received at user $i$, and}\\
        		& \text{2) results in a packet recovery at user $i$};\\
        0		& \text{Otherwise}\\
      \end{cases}
    \]
    Note that a user cannot recover any packet if it does not receive the coded packet.
    That is why the random variable  $x_{i,{\mathcal{C}}}$ is set to zero if the coded packet is not received at node $i$.
    Also, a received coded packet can result in a packet recovery  only if the user is missing exactly one packet from the set $\mathcal{C}$.
      
    The random variable $X_{\mathcal{C}}=\sum_{i=1}^{N}x_{i,{\mathcal{C}}} $ is the sum of $N$ independent binary
    random variables, thus, by a Chernoff bound, we get
    \begin{equation}
    \label{equ:ch-p1}
    \begin{split}
      Pr(X_{\mathcal{C}}>(1+\delta)\mu_{\mathcal{C}}) 
      \leq e^{\frac{-\delta^2\mu_{\mathcal{C}}}{3}}
    \end{split}
    \end{equation}   
    Let $P_{\mathcal{C}}=\frac{1}{N}\cdot \sum_{i=1}^{N} Pr(x_{i, \mathcal{C}}=1)$
    and
    $
      p^*=\max_{\mathcal{C}\subseteq \mathcal{P}} P_{\mathcal{C}}.
    $
    We have $\mu_{\mathcal{C}}=N\cdot P_{\mathcal{C}}$ and 
    $
      G_{R}=\max_{\mathcal{C}\in\mathcal{P}} \mu_{\mathcal{C}} = N\cdot p^*.
    $    
    By~(\ref{equ:ch-p1}), we get
    \[
    \begin{split}
      Pr\left(X_{\mathcal{C}}>(1+\delta)G_{R} \right) 
      &=Pr\left(X_{\mathcal{C}}>(1+\delta)\cdot \frac{G_{R}}{\mu_{\mathcal{C}}}\cdot \mu_{\mathcal{C}}\right) \\
      &=Pr\left(X_{\mathcal{C}}>(1+\delta)\cdot \frac{p^*}{P_{\mathcal{C}}} \cdot \mu_{\mathcal{C}}\right) \\      
      &\leq Pr\left(X_{\mathcal{C}}>(1+\underbrace{\delta\cdot \frac{p^*}{P_{\mathcal{C}}}}_{\delta'}) \cdot \mu_{\mathcal{C}}\right) \\            
      &\leq e^{\frac{-\delta'^2\cdot \mu_{\mathcal{C}}}{3}}\\
      &=  e^{\frac{-\delta'^2\cdot N\cdot P_{\mathcal{C}}}{3}}
      =  e^{\frac{-(\delta\cdot \frac{p^*}{P_{\mathcal{C}}})^2NP_{\mathcal{C}}}{3}} \\   
      &\leq  e^{-\delta^2Np^*}       
      \leq \frac{\epsilon}{2^M}.
    \end{split}
    \]    
    The total number of subsets $\mathcal{C}$ of $\mathcal{P}$ is $2^M$, and 
    $Pr(X_{\mathcal{C}}>(1+\delta)G_{R}) \leq \frac{\epsilon}{2^M}$ for every random variable $X_{\mathcal{C}}$.
    Therefore, by the union bound, the probability that $X_{\mathcal{C}}>(1+\delta)G_{R}$ for at least one set  $\mathcal{C}$ is at most 
    \[
    2^M\cdot \frac{\epsilon}{2^M}=\epsilon.
    \]
    Thus,
    \[
    \begin{split}
    G_{opt}
    =\max_{\mathcal{C}\subseteq \mathcal{P}} X_{\mathcal{C}} 
    \leq (1+\delta) \max_{\mathcal{C}\subseteq \mathcal{P}} \mu_{\mathcal{C}} 
    = (1+\delta)G_{R}.
    \end{split}
    \]  
    with probability at least $1-\epsilon$.   
    Therefore,
    \[
      G_{R} \geq (1-\delta)\cdot G_{opt},
    \]
    with probability at least $1-\epsilon$.

\section{Proof of Theorem~\ref{thm:expected} }
    \label{app:expected}
      Let the binary random variable $y'_{d,i}$ be the number of recovered packet at user $u_i$, and
      \[
        \Delta_i=\frac{{h_i \choose d-1}(M-h_i) } {{M \choose d}}.
      \]
      We have
      
      \[
      \begin{split}
        &E[y'_{d,i}|H_i]\\
        &= \int_{0}^{1} E[y'_{d,i}|H_i, \mathcal{E}_i=\epsilon] Pr(\mathcal{E}_i=\epsilon| H_i) d\epsilon \\
        &= \int_{0}^{1} \left( \frac{{h_i \choose d-1}(M-h_i) } {{M \choose d}}\cdot (1-\epsilon)\right)Pr(\mathcal{E}_i=\epsilon| H_i) d\epsilon \\    
        &=  \Delta_i\cdot \int_{0}^{1} (1-\epsilon)Pr(\mathcal{E}_i=\epsilon| H_i) d\epsilon \\
        &=  \Delta_i\cdot \int_{0}^{1} (1-\epsilon)\frac{Pr(\mathcal{E}_i=\epsilon)}{Pr(H_i)}\cdot Pr( H_i|\mathcal{E}_i=\epsilon) d\epsilon \\    
        &=  \Delta_i\cdot \int_{0}^{1} (1-\epsilon)\frac{Pr( H_i|\mathcal{E}_i=\epsilon) }{Pr(H_i)} d\epsilon \\     
        &=  \Delta_i\cdot \int_{0}^{1} (1-\epsilon)\frac{Pr( H_i|\mathcal{E}_i=\epsilon)}{\int_0^1 Pr(H_i|\mathcal{E}_i=x)Pr(\mathcal{E}_i=x) dx}   d\epsilon \\   
       &=  \Delta_i\cdot \frac{ \int_{0}^{1} (1-\epsilon)Pr( H_i|\mathcal{E}_i=\epsilon) d\epsilon}{\int_0^1 Pr(H_i|\mathcal{E}_i=x) dx}    \\ 
       &=  \Delta_i\cdot 
       \frac{ \int_{0}^{1} (1-\epsilon){M \choose d} (1-\epsilon)^{h_i}\epsilon^{M-h_i}  d\epsilon}
       {\int_0^1 {M \choose d} (1-x)^{h_i}x^{M-h_i}  dx}    \\            
       &=  \Delta_i\cdot 
       \frac{ \int_{0}^{1}  (1-\epsilon)^{h_i+1}\epsilon^{M-h_i}  d\epsilon}
       {\int_0^1 (1-x)^{h_i}x^{M-h_i}  dx}    \\  
       &=  \Delta_i\cdot 
       \frac{ \int_{0}^{1}  (1-\epsilon)^{h_i+1}\epsilon^{M+1-(h_i+1)}  d\epsilon}
       {\int_0^1 (1-x)^{h_i}x^{M-h_i}  dx}.    \\        
      \end{split}
      \]   
    We have
    \[
    \begin{split}
        &\forall \quad a,b,a-b \in \mathbb{Z}^{\ge 0} \\
        &f(a,b)= \int_0^1 (1-x)^{b}x^{a-b}  dx=\frac{b!(a-b)!}{(a+1)!},
    \end{split}    
    \]
    thus
    \[
    \begin{split}
        E[y'_{d,i}|H_i]
       &=  \Delta_i\cdot  \frac{f(M+1, h_i+1)}{f(M, h_i)}    \\
       &=\Delta_i\cdot  \frac{\frac{(b+1)!(a-b)!}{(a+2)!}}{\frac{b!(a-b)!}{(a+1)!}}\\
       &=\Delta_i\cdot  \frac{h_i+1}{M+2}.\\
    \end{split}
    \]
      Note that
      \[
        E[y'_{d,i}|H_1, H_2, \ldots, H_N]= E[y'_{d,i}|H_i],
      \]  
      because $y'_{d,i}$  is independent of the events $H_j$, $j\neq i$.
      We have
      \[
      \begin{split}
        E[Y'_d|H_1, H_2, \ldots, H_N] 
        &= \sum_{i=1}^{N}E[y'_{d,i}|H_1, H_2, \ldots, H_N] \\
        &= \sum_{i=1}^{N}E[y'_{d,i}|H_i] \\
        &= \sum_{i=1}^{N} \left( \frac{{h_i \choose d-1}(M-h_i) } {{M \choose d}}\cdot \frac{h_i+1}{M+2} \right)    
      \end{split}
      \]

\end{appendices}


\bibliographystyle{IEEEtran}
\bibliography{IEEEabrv,ref.bib}

\end{document}